\documentclass[letterpaper, 10 pt, conference]{ieeeconf} 
\IEEEoverridecommandlockouts
\usepackage{epstopdf}
\usepackage[OT1]{fontenc} 
\usepackage{fontenc}
\IEEEoverridecommandlockouts                              
\usepackage{array}
\usepackage{tikz}
\usetikzlibrary{positioning}
\usepackage{tabularx}
\usepackage{xcolor}
\usepackage{amsmath}
\usepackage{amsfonts}
\usepackage{mathtools}
\usepackage{stmaryrd}
\usepackage[ruled,vlined]{algorithm2e}
\usepackage{tcolorbox}

\usepackage{wrapfig}
\usepackage{tikz}
\usepackage{graphicx}
\usepackage{svg}
\usepackage{subcaption}
\usepackage{float}
\usepackage{amssymb}
\usepackage[colorlinks=true, urlcolor=blue, linkcolor=red]{hyperref}
\usepackage{moresize}
\usepackage[utf8]{inputenc}
\usepackage{etoolbox}
\usepackage{lipsum}
\usepackage{subcaption}
\usepackage{newtxmath}
\let\labelindent\relax
\usepackage{enumitem}
\usepackage{scalerel}
\usepackage{epsfig}
\usepackage{cite}
\usepackage{subcaption}
\usepackage{hyperref}

\hypersetup{
    colorlinks=true,
    linkcolor=black,
    urlcolor=blue,
    pdftitle={},
    pdfpagemode=FullScreen,
    }
\usepackage{anyfontsize}

\def\PS#1{{\textcolor{orange}{ {\bf PS:} #1}}}
\def\BS#1{{\textcolor{red}{ {\bf BS:} #1}}}
\def\PJ#1{{\textcolor{green}{ {\bf PJ:} #1}}}

\usetikzlibrary{shapes.geometric, arrows}
\usetikzlibrary{calc,fit,arrows}
\newtheorem{theorem}{Theorem}[section]
\newtheorem{lemma}[theorem]{Lemma}

\newtheorem{problem}[theorem]{Problem}

\newtheorem{definition}[theorem]{Definition}

\newtheorem{remark}[theorem]{Remark}
\newtheorem{assumption}{Assumption}
\usepackage{accents}

\newlist{pcases}{enumerate}{1}
\setlist[pcases]{
  label=\underline{Case~\arabic*:}\protect\thiscase.~,
  ref=\arabic*,
  align=left,
  labelsep=0pt,
  leftmargin=0pt,
  labelwidth=0pt,
  parsep=0pt
}
\newcommand{\case}[1][]{%
  \if\relax\detokenize{#1}\relax
    \def\thiscase{}%
  \else
    \def\thiscase{~#1}%
  \fi
  \item
}

  \usepackage{cite}
\usepackage{amsmath,amssymb,amsfonts}
\usepackage{algorithmic}
\usepackage{textcomp}
\usepackage{xcolor}
\def\BibTeX{{\rm B\kern-.05em{\sc i\kern-.025em b}\kern-.08em
    T\kern-.1667em\lower.7ex\hbox{E}\kern-.125emX}}
\begin{document}
\def\PS#1{{\textcolor{orange}{ {\bf PS:} #1}}}
\def\BS#1{{\textcolor{red}{ {\bf BS:} #1}}}
\def\PJ#1{{\textcolor{green}{ {\bf PJ:} #1}}}
\title{
Sliding Mode Control for Safe Trajectory Tracking with Moving Obstacles Avoidance: Experimental Validation on Planar Robots  
\thanks{* The authors contributed equally.}
\thanks{This work was supported in part by the ANRF Advanced Research Grant and the Siemens fellowship.}
\thanks{$^1$ Centre for Cyber-Physical Systems and $^2$ Department of Aerospace Engineering, Indian Institute of Science, Bengaluru, India. 
{Emails: \{shubhamsg, sangeerthp, saharsh2021, \newline pushpak\}@iisc.ac.in}}
\thanks{Link:https://youtu.be/dWcxwum96vk}
}

\author{Shubham Sawarkar$^{1*}$, P Sangeerth$^{1*}$, S. Saharsh$^1$, and Pushpak Jagtap$^{1,2}$}

\maketitle

\begin{abstract}
This paper presents a unified control framework for robust trajectory tracking and moving obstacle avoidance applicable to a broad class of mobile robots. By formulating a generalized kinematic transformation, we convert diverse vehicle dynamics into a strict feedback form, facilitating the design of a Sliding Mode Control (SMC) strategy for precise and robust reference tracking. To ensure operational safety in dynamic environments, the tracking controller is integrated with a Collision Cone Control Barrier Function (C3BF) based safety filter. The proposed architecture guarantees asymptotic tracking in the presence of external disturbances while strictly enforcing collision avoidance constraints. The novelty of this work lies in designing a sliding mode controller for ground robots like the Ackermann drive, which has not been done before. The efficacy and versatility of the approach are validated through numerical simulations and extensive real-world experiments on three distinct platforms: an Ackermann-steered vehicle, a differential drive robot, and a quadrotor drone. Video of the experiments are available at \href{https://youtu.be/dWcxwum96vk}{[video link]}.
\end{abstract}


\section{Introduction}
Autonomous systems operating in unstructured environments such as industrial automation, service robotics, and self-driving vehicles-require control strategies that reconcile accurate trajectory tracking with verifiable safety. As these systems increasingly interact with humans and dynamic agents, strict reference tracking and guaranteed collision avoidance under uncertainty become essential. This paper addresses these challenges by proposing an integrated control architecture that combines the robustness of Sliding Mode Control (SMC) for tracking with the safety guarantees of Control Barrier Functions (CBFs) for obstacle avoidance.

Sliding Mode Control (SMC) is renowned for its insensitivity to matched uncertainties, ensuring precise trajectory tracking despite perturbations. Advanced formulations, such as backstepping \cite{huang2024backstepping} and integral SMC \cite{ibrahim2016wheeled,li2025integral}, have demonstrated high accuracy and convergence. Practically, these benefits have been validated through hardware experiments on non-holonomic mobile platforms \cite{yang1999sliding,mu2017nonlinear} and line-following robots \cite{yildiz2020sliding}.

Safety, particularly collision avoidance, is paramount for robots operating in shared, dynamic environments. Control Barrier Functions (CBFs) address this challenge by enforcing forward-invariant safe sets \cite{ames2019control}, with collision cone concepts specifically enabling moving obstacle avoidance \cite{goswami2024collision}. These methods have been successfully applied to UGVs \cite{goswami2024collision} and UAVs \cite{C3BF}. Prior works, including leader-follower tracking \cite{zhao2023safe} and unified CLF-CBF stability frameworks \cite{ding2023guaranteed}, integrate SMC with CBFs, combining robustness with active safety. Additionally, linear SMC combined with High-Order CBFs has been validated on hardware for static obstacle avoidance \cite{andrabi2025trajectory}.

Despite extensive work on SMC and CBFs, a gap remains between theory and practice. While simulation results are common, hardware validations of robust SMC-based tracking with moving obstacle avoidance are scarce. Accordingly, this work makes the following contributions:

\begin{enumerate}
    \item To the best of the authors’ knowledge, limited experimental validation exists for SMC derived directly from the kinematic model of an Ackermann-steered vehicle in the presence of moving obstacles. 
    
    \item Unlike prior works \cite{8796166,411340} that use specialized sliding surfaces for underactuated differential drive robots, we propose a canonical transformation that yields a strict-feedback form, removing structural constraints and enabling a broad class of SMC laws \cite{utkin2013sliding}.
    
    \item We extend the tracking framework to moving obstacle avoidance in non-holonomic systems by integrating a Collision Cone Control Barrier Function (C3BF) safety filter \cite{C3BF,goswami2024collision}, ensuring collision-free tracking incorporating robust CBF to handle the disturbances.
    
    \item Rigorous hardware experiments on both Ackermann-steered and differential drive platforms validate the unified SMC--CBF framework, demonstrating robust performance in real-world dynamic environments.
\end{enumerate}

\section{Preliminaries and Problem Formulation}
\subsection{Notations}
We denote sets of positive integers, real numbers, positive real numbers, and non-negative real numbers, respectively, by $\mathbb{N}:=\{1,2,3,\dots\}$, $\mathbb{R}$, $\mathbb{R}^+$, and $\mathbb{R}^+_0$. The symbol ${\mathbb{R}^n}$ is used to denote an $\textit{n}$-dimensional Euclidean space. A column vector $x \in \mathbb{R}^n$ is denoted by $x\hspace{-0.2em}=\hspace{-0.2em}[x_1;x_2;\cdots;x_n]$. The Euclidean norm of $\boldsymbol{x} \in \mathbb{R}^n$ is represented by $\lVert \boldsymbol{x}\rVert$. For a vector $\boldsymbol{x} \in \mathbb{R}^n$, the infinity norm $\lVert \boldsymbol{x} \rVert_{\infty}:=\max\{x_1,\ldots,x_n\}$. We use bold-faced symbols $\boldsymbol{x} \in \mathbb{R}^n$ for vectors and normal symbols for scalars $x \in \mathbb{R}$. For $\boldsymbol{x} \in \mathbb{R}^n$, the notation $|\boldsymbol{x}|$ denotes the vector whose components are the absolute values of the corresponding components of $\boldsymbol{x}$. For a vector $\boldsymbol{x} \in \mathbb{R}^n$, $\operatorname{sgn}(\cdot)$ denotes the element-wise signum function. For a matrix $A \in \mathbb{R}^{m \times n}$, $A^\top$ denotes its transpose.  A continuous function $\alpha:\mathbb{R}^+_0\rightarrow\mathbb{R}^+_0$ is class-$\mathcal{K}$ if $\alpha(0)=0$ and is strictly increasing. If $\alpha\in\mathcal{K}$, and is radially unbounded, i.e., $\alpha(r)\rightarrow\infty$ as $r\rightarrow\infty$, then $\alpha$ is a class $\mathcal{K}_{\infty}$ function. An extended class-$\mathcal{K}$ function is a continuous, strictly increasing mapping
$\alpha:(-a,b)\rightarrow[0,\infty)$ with $\alpha(0)=0$, where $a,b>0$.


\subsection{System Description}
\label{system definitions subsection}
We consider a ground robot vehicle which could be modelled in the form:
\begin{align}
        \dot{\boldsymbol{p}} &= \boldsymbol{\upsilon},  \nonumber \\
        \dot{\boldsymbol{\upsilon}} &= f_{\upsilon}(\boldsymbol{p},\boldsymbol{\upsilon}) + h_{\upsilon}(\boldsymbol{p},\boldsymbol{\upsilon})(\boldsymbol{u}+\boldsymbol{d})\label{eqn: sys},
\end{align}
where $\boldsymbol{p}\in\mathbb{R}^2$ represents the position of the robot in 2D plane, $\boldsymbol{\upsilon}\in\mathbb{R}^2$ represents the velocity of the robot in 2D plane, $\boldsymbol{u}\in \mathbb{R}^2$ is the input vector and $\boldsymbol{d}\in\mathbb{R}^2$ is matched bounded disturbance. The mappings $f_{\upsilon}:\mathbb{R}^2\times\mathbb{R}^2\to\mathbb{R}^2$ and $h_{\upsilon}:\mathbb{R}^2\times\mathbb{R}^2\to\mathbb{R}^{2\times 2}$ are locally Lipschitz continuous, ensuring the existence and uniqueness of solutions \cite{Khalil:1173048}. The matched disturbance
$\boldsymbol{d}(t)$ is assumed to be unknown but bounded, i.e.,
$\|\boldsymbol{d}(t)\|_\infty \le \bar d,
$
where $\bar d>0$ is a known constant upper bound on the matched disturbance. In this work, we make the following assumption to work on a compact state-space.

\begin{assumption}
\label{ass:hv_regular}
On an admissible set $\Omega\subset\mathbb R^{4}$,
$h_{\upsilon}(\boldsymbol x)$ is uniformly nonsingular and bounded, i.e.,
$\|h_{\upsilon}(\boldsymbol x)\|_\infty\le\bar h$
for all $\boldsymbol x=(\boldsymbol p,\boldsymbol \upsilon)\in\Omega$.
\end{assumption}

Next, we formally state the problem considered in this work.

\begin{tcolorbox}[width=\linewidth]
\begin{problem}
\label{Problem-1}
Given a robot whose dynamics is given by \eqref{eqn: sys}. The aim is to design a tracking control such that the robot follows the desired trajectory $\boldsymbol{p}_\mathrm{ref} \in \mathcal{C}^2$ in practical scenarios, which include the presence of disturbance using a sliding mode control strategy, avoiding moving obstacles, where $\mathcal{C}^2$ represents the class of twice differentiable functions.
\end{problem}
\end{tcolorbox}

 \begin{remark}
 A broad class of planar robotic systems can be modeled as \eqref{eqn: sys}, for example, a simple double integrator, Ackermann-drive vehicles, differential-drive robots \cite{C3BF} as well as aerial vehicles such as drones and fixed-wing aircraft operating in altitude hold mode \cite{beard2012small}. The general dynamics for planar robots under some minor assumptions can be transformed into the form of \eqref{eqn: sys} which is elaborated in the next section.
 \end{remark}

\section{Vehicle Models}


In this section we show that specific vehicle models can be reformulated into the structure of \eqref{eqn: sys} by defining an alternative velocity vector $\boldsymbol{\upsilon}$. We illustrate this claim in this work using an Ackermann steering model and a differential drive model.

Consider the kinematic model of an Ackermann-drive vehicle (refer Fig~\ref{fig:ackermann}):
\begin{align}\label{eqn: ackermann drive}
    \dot{\boldsymbol{p}} &= v\begin{bmatrix}
        \cos(\delta_{1} + \delta_{2}) \\
        \sin(\delta_{1} + \delta_{2})
    \end{bmatrix}, \quad \dot{\delta}_{1} = \frac{v}{l_r}\sin{\delta_{2}}, \nonumber \\
    \dot{v} &= u_{1}+d_1, \quad \dot{\delta}_{3} = u_{2}+d_2,
\end{align}
where $v$ is the linear velocity, $\delta_{1}$ is the robot's heading angle, and $l_r, l_f$ are the distances from the center of mass (CoM) to the rear and front axles, respectively, $\delta_{2}= \tan^{-1}\left(\frac{l_r}{l_r + l_f}\tan(\delta_{3})\right)$ is the slip angle, which depends on the steering angle $\delta_{3}$. The inputs are the linear acceleration $u_{1}$ and the steering angular velocity $u_{2}$.

To transform this into the proposed form, let us define the alternate velocity vector $\boldsymbol{\upsilon}$ as:
\begin{align}\label{ack1}
    \dot{\boldsymbol{p}} = \boldsymbol{\upsilon} = v\begin{bmatrix}
        \cos(\delta_{1} + \delta_{2}) \\
        \sin(\delta_{1} + \delta_{2})
    \end{bmatrix}.
\end{align}
Differentiating \eqref{ack1} with respect to time yields:
\begin{align}\label{eqn: converted ackermann drive}
     &\dot{\boldsymbol{\upsilon}}= \dot{v}\begin{bmatrix}
        \cos(\delta_{1} + \delta_{2}) \\
        \sin(\delta_{1} + \delta_{2})
    \end{bmatrix} + v\begin{bmatrix}
        -\sin(\delta_{1} + \delta_{2}) \\
        \cos(\delta_{1} + \delta_{2})
    \end{bmatrix}(\dot{\delta}_{1} + \dot{\delta}_{2}) \nonumber \\
    &= \underbrace{\frac{v^2}{l_r}\sin{\delta_{2}}\begin{bmatrix}
        -\sin(\psi) \\
        \cos(\psi)
    \end{bmatrix}}_{f_{\upsilon}} + \underbrace{\begin{bmatrix}
    \cos(\psi) & -\mathcal{K}(\delta_3) v \sin(\psi) \\
    \sin(\psi) & \mathcal{K}(\delta_3) v \cos(\psi)
\end{bmatrix}}_{h_{\upsilon}}(\boldsymbol{u}+\boldsymbol{d}),
\end{align}
where $\boldsymbol{u} = [u_{1}, u_{2}]^{\top}$, $\psi = \delta_1 + \delta_2$, and $
   \mathcal{K}(\delta_3) = \frac{\frac{l_r}{l_r + l_f}\sec^2(\delta_{3})}{1 + \left(\frac{l_r}{l_r + l_f}\right)^2\tan^2(\delta_{3})}$.
Thus, the Ackermann dynamics \eqref{eqn: ackermann drive} is transformed into the form of \eqref{eqn: sys}. The state vector $\boldsymbol{\delta}$ can be chosen as either $[\delta_1, \delta_2]^\top$ or $[\delta_1, \delta_3]^\top$, since the algebraic relation between $\delta_2$ and $\delta_3$ allows one to be determined from the other. 

Similarly, consider a differential drive robot (refer Fig~\ref{fig:unicycle}) model whose kinematics are governed by $\dot{x} = v\cos{\theta},\dot{y} = v\sin{\theta},\dot{\theta} = w$, where $x, y$ represent the robot's position in the 2D plane, and $v, w$ represent the linear and angular velocities, respectively. Following a similar procedure, the above dynamics can be transformed into the structure of \eqref{eqn: sys} as follows:

\begin{align}
\label{eqn: converted unicycle model}
    \dot{\boldsymbol{p}} = \boldsymbol{\upsilon},
    \dot{\boldsymbol{\upsilon}} = \begin{bmatrix}
        \cos(\theta) & -v \sin(\theta)\\
        \sin(\theta) & v \cos(\theta)
    \end{bmatrix}\begin{bmatrix}
        u_{1}+d_1 \\
        u_{2}+d_2
    \end{bmatrix},
\end{align}
where the state vector is $\boldsymbol{p} = [x, y]^\top$, and the time-varying parameters $\theta$ and $v$ are available (measurable via IMU or odometry). The control inputs are defined as the linear acceleration $u_1$ and the angular velocity $u_2$. One can see from \eqref{eqn: converted ackermann drive} and \eqref{eqn: converted unicycle model}, that $h_\upsilon^{-1}$ does not exist when $v=0$. This clearly gives us a restriction to use the above mentioned controller using feedback linearization methods. In order to overcome this limitation, we further make the following assumption.
\begin{figure}[!ht]
    \centering
    \begin{subfigure}[t]{0.48\linewidth}
        \centering
        \includegraphics[width=1.25\linewidth]{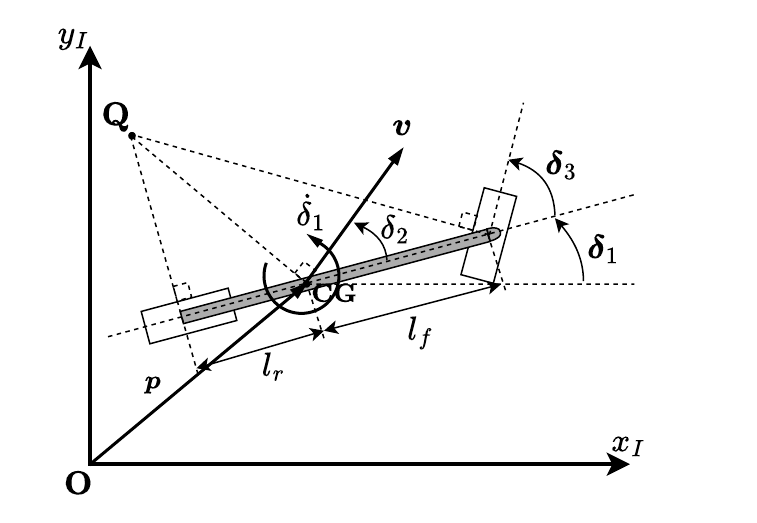}
        \caption{Ackermann Drive}
        \label{fig:ackermann}
    \end{subfigure}
    \hfill
    \begin{subfigure}[t]{0.48\linewidth}
        \centering
        \includegraphics[width=1.3\linewidth]{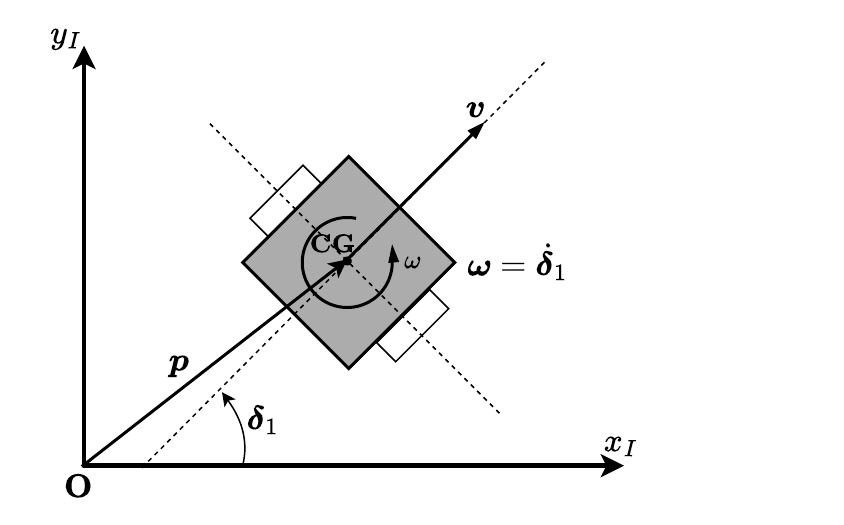}
            \caption{Differential drive}
        \label{fig:unicycle}
    \end{subfigure}
    \caption{Vehicle models}
    \label{fig:vehicle_models}
\end{figure}

\begin{assumption}
\label{ass:forward_motion}
The considered maneuvers are executed in the forward-driving mode, and the longitudinal velocity $v(t)$
is uniformly sign definite for all $t\ge0$. The reference trajectory $\boldsymbol{p}_{\mathrm{ref}}(t)$ is chosen accordingly. The maximum forward velocity $v_{\max}$ and the maximum steering angle $\delta_{3,\max}$ of the Ackermann vehicle are known and bounded.
\end{assumption}

With the above assumption, now we introduce the following lemma, which guarantees that $\lVert h_\upsilon \rVert_\infty$ is bounded for the Ackermann drive so that one can satisfy Assumption~\ref{ass:hv_regular}. One can derive a similar condition on the boundedness of $\lVert h_\upsilon \rVert_\infty$ for other mobile robots, which is not shown here for brevity. We need this boundedness in Section~\ref{sec: tracking_controller}.
\begin{lemma}
\label{lem:svd_ackermann_G}
Let $h_{\upsilon}$ be the input matrix in \eqref{eqn: converted ackermann drive}. Its singular values are $\sigma(h_{\upsilon})=\{1,|\mathcal{K}(\delta_3)v|\}$.
If $|\delta_3|\le\delta_{3,\max}<\pi/2$ and $|v|\in[v_{\min},v_{\max}]$ with $v_{\min}>0$, then $\sigma_{\min}(h_{\upsilon}) \ge \min\left\{1, \frac{l_r}{l_r+l_f}v_{\min}\right\} > 0 \text{ and}$ $\sigma_{\max}(h_{\upsilon}) 
\le \max\left\{1, \mathcal{K}(\delta_{3,\max}) v_{\max}\right\}.$ Thus, $h_{\upsilon}$ is uniformly nonsingular and bounded on this set.
\end{lemma}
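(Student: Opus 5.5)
The plan is to factor $h_{\upsilon}$ into an orthogonal matrix times a diagonal matrix, read off the singular values, and then bound the scalar $\mathcal{K}(\delta_3)v$ on the admissible set. From \eqref{eqn: converted ackermann drive}, the two columns of $h_{\upsilon}$ are $[\cos\psi;\sin\psi]$ and $\mathcal{K}(\delta_3)v\,[-\sin\psi;\cos\psi]$. Hence
\begin{align*}
h_{\upsilon} = R(\psi)\,\mathrm{diag}\bigl(1,\ \mathcal{K}(\delta_3)v\bigr), \qquad R(\psi)=\begin{bmatrix}\cos\psi & -\sin\psi\\ \sin\psi & \cos\psi\end{bmatrix}.
\end{align*}
Since $R(\psi)$ is orthogonal, $h_{\upsilon}^\top h_{\upsilon}=\mathrm{diag}(1,\mathcal{K}(\delta_3)^2v^2)$. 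The singular values are therefore exactly $1$ and $|\mathcal{K}(\delta_3)v|$, which proves the first claim. The remaining bounds reduce to bounding $|\mathcal{K}(\delta_3)|$ from below and above on $|\delta_3|\le\delta_{3,\max}<\pi/2$.

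Write $c=\frac{l_r}{l_r+l_f}\in(0,1)$ and $t=\tan^2\delta_3\ge 0$. Using $\sec^2\delta_3=1+t$, we get
\begin{align*}
\mathcal{K}(\delta_3)=g(t):=\frac{c(1+t)}{1+c^2t}>0, \qquad g'(t)=\frac{c(1-c^2)}{(1+c^2t)^2}>0 .
\end{align*}
So $\mathcal{K}$ is positive, even in $\delta_3$, and increasing in $|\delta_3|$ on $[0,\pi/2)$, because $t$ is increasing in $|\delta_3|$ there. Consequently $c=g(0)\le\mathcal{K}(\delta_3)\le\mathcal{K}(\delta_{3,\max})$, and the latter is finite because $\delta_{3,\max}<\pi/2$. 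Multiplying by $|v|\in[v_{\min},v_{\max}]$ gives
\begin{align*}
c\,v_{\min}\le|\mathcal{K}(\delta_3)v|\le\mathcal{K}(\delta_{3,\max})v_{\max}.
\end{align*}
Taking the minimum and maximum with the other singular value $1$ yields the stated bounds on $\sigma_{\min}(h_{\upsilon})$ and $\sigma_{\max}(h_{\upsilon})$. The lower bound is positive because $c>0$ and $v_{\min}>0$.

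Finally, I connect this to Assumption~\ref{ass:hv_regular}. A positive $\sigma_{\min}$ gives uniform nonsingularity, with $\|h_{\upsilon}^{-1}\|_2\le 1/\sigma_{\min}$. By norm equivalence in $\mathbb{R}^{2\times 2}$, $\|h_{\upsilon}\|_\infty\le\sqrt{2}\,\sigma_{\max}$, so $\bar h=\sqrt{2}\max\{1,\mathcal{K}(\delta_{3,\max})v_{\max}\}$ is a valid constant. The only step needing care is the monotonicity of $\mathcal{K}$. The substitution $t=\tan^2\delta_3$ turns it into a one-line derivative computation, and its sign depends on $c<1$, which holds since $l_f>0$.
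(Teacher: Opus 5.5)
Your proof is correct and follows the paper's argument essentially step for step: the factorization $h_{\upsilon}=R(\psi)\,\mathrm{diag}(1,\mathcal{K}(\delta_3)v)$ with $R(\psi)$ orthogonal, then the substitution $t=\tan^2\delta_3$ giving $\mathcal{K}=c(1+t)/(1+c^2t)$ and monotonicity to obtain $c\le\mathcal{K}(\delta_3)\le\mathcal{K}(\delta_{3,\max})$. Your explicit derivative $g'(t)=c(1-c^2)/(1+c^2t)^2>0$ (using $c<1$) justifies the monotonicity that the paper only asserts. Your closing link to Assumption~\ref{ass:hv_regular} via $\|h_{\upsilon}\|_\infty\le\sqrt{2}\,\sigma_{\max}$ matches the paper's later remark on choosing the switching gain.
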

\begin{proof}
    See Appendix
\end{proof}

In the subsequent sections, we present the proposed solution to the problem formulated above.



\section{Tracking Controller}
\label{sec: tracking_controller}
To track $\boldsymbol{p}_{\mathrm{ref}}$, we cast the robot dynamics into the control-affine form \eqref{eqn: sys}. This transformation enables a systematic sliding mode control design applicable to nonlinear vehicles such as Ackermann drives and Turtlebots.

\subsection{Sliding Mode Tracking Controller}
\label{sec:sliding}
Let $\boldsymbol{p}_{\mathrm{ref}}$ denote a reference trajectory. For the system \eqref{eqn: sys}, define the tracking errors
$\boldsymbol{e}_1 := \boldsymbol{p}_{\mathrm{ref}}-\boldsymbol{p}$ and
$\boldsymbol{e}_2 := \dot{\boldsymbol{p}}_{\mathrm{ref}}-\boldsymbol{\upsilon}$.

Following standard sliding mode design principles
\cite{utkin2013sliding},
we define 
a standard sliding surface of the form
$\boldsymbol{S}=\boldsymbol{s}(\boldsymbol{e}_1)+\boldsymbol{e}_2$
is adopted, where $\boldsymbol{s}(\cdot)$ may be chosen as a linear or
non-linear function of compatible sliding surface. The control input is selected as the sum of a nominal equivalent control and a robust switching term, 
$
\boldsymbol{u} = \boldsymbol{u}_{\mathrm{nom}} + \boldsymbol{u}_{\mathrm{sw}},$ with
\begin{equation}
\label{eqn:control_law_simplified}
\begin{aligned}
\boldsymbol{u}
&=\underbrace{h_{\upsilon}^{-1}\Big(\ddot{\boldsymbol{p}}_{\mathrm{ref}}- f_{\upsilon}+ \frac{\partial {s}}{\partial \boldsymbol{e}_1}\boldsymbol{e}_2\Big)}_{\boldsymbol{u}_{\mathrm{nom}}}+\underbrace{h_{\upsilon}^{-1}K\,\operatorname{sgn}(\boldsymbol{S})}_{\boldsymbol{u}_{\mathrm{sw}}}.
\end{aligned}
\end{equation}

The gain $K$ is chosen to satisfy the reaching condition
\begin{equation}
\label{eqn: bounded_k}
K >\|h_{\upsilon}\|_\infty\|\boldsymbol{d}\|_\infty+ \eta,\qquad\eta > 0.
\end{equation}

The closed-loop sliding dynamics satisfy a finite-time reaching condition, and the sliding manifold $\boldsymbol{S}=\mathbf{0}$ is reached in finite time $T \le \frac{\sqrt{\boldsymbol{S}(0)^\top \boldsymbol{S}(0)}}{\eta}$.
 For a detailed finite-time convergence analysis, the reader is referred to \cite[Ch.~3]{utkin1996integral}.

The reference trajectory is selected such that $\|\dot{\boldsymbol{p}}_{\mathrm{ref}}(t)\|$ remains sufficiently large to ensure $v(t)\ge v_{\min}$ for all $t\ge0$. To admit classical solutions, we adopt a practical sliding-mode formulation by replacing $\operatorname{sgn}(\boldsymbol{S})$ with $\mathrm{sat}(\boldsymbol{S}/\lambda)$, which ensures finite-time convergence of $\boldsymbol{S}$ to the boundary-layer manifold $\|\boldsymbol{S}\|\le\lambda$ \cite{utkin2013sliding,edwards1998sliding}.

\begin{remark}
\label{rem: h_upsilon_boundedness}
The switching gain is chosen as $K>\|h_{\upsilon}\|_{\infty}\bar d+\eta$. Using the norm inequality for a matrix $A$, $\|A\|_{\infty}\le\sqrt{m}\,\sigma_{\max}(A)$ and Lemma~\ref{lem:svd_ackermann_G}, it suffices to select $K>\sqrt{2}\,\sigma_{\max}(h_{\upsilon}(\delta_{3,\max},v_{\max}))\,\bar d+\eta$. Similarly, for differential drive $K>\sqrt{2}\,\sigma_{\max}(h_{\upsilon}(v_{\max}))\,\bar d+\eta$ is sufficient.
\end{remark}

\begin{figure}
        \centering
        \includegraphics[width=0.8\linewidth]{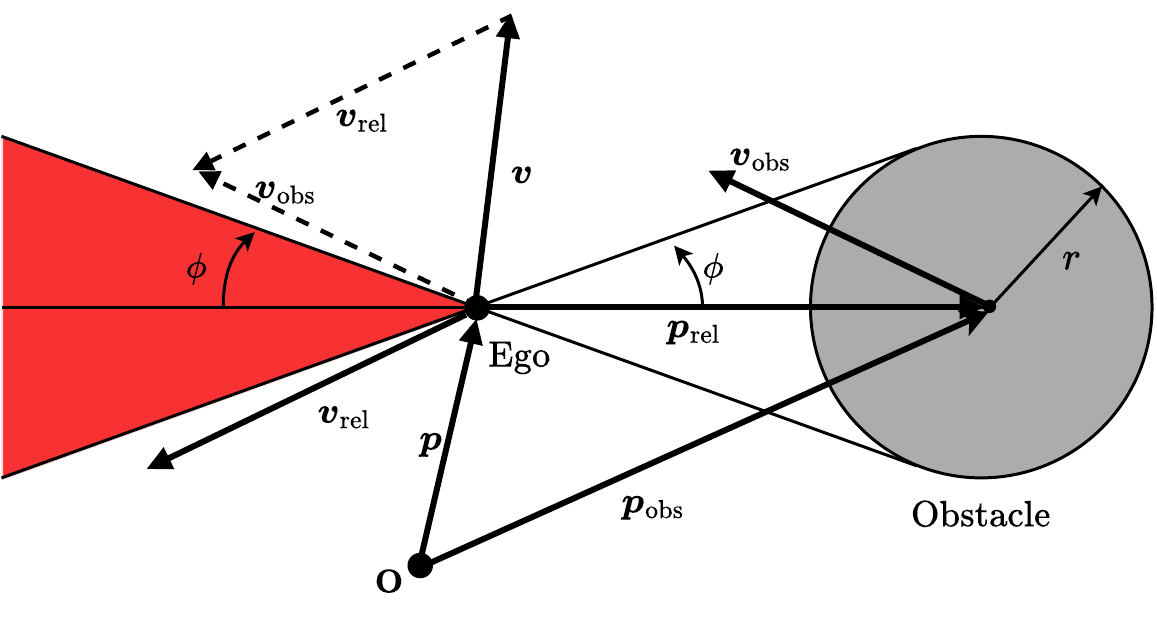}
        \caption{ Construction of collision cone for a circular obstacle considering the ego vehicle’s dimensions}
        \label{fig:cone}
\end{figure}

\section{Obstacle Avoidance}

Having introduced the vehicle dynamics and the robust tracking controller, we now present a Control Barrier Function (CBF) framework for enforcing safety constraints. We adopt the control barrier function formulation, which guarantees forward invariance of a prescribed safe set. This formulation underlies most CBF--QP safety filters used in modern safety-critical control \cite{amestheory,xu2016robustness,JANKOVIC2018359}.

Consider the matched-disturbance control-affine system
\begin{equation}
\dot{\boldsymbol{x}} = f(\boldsymbol{x}) + g(\boldsymbol{x})\big(\boldsymbol{u}+\boldsymbol{d}\big),
\qquad \|\boldsymbol{d}\|_\infty \le \bar d,\ \forall t\ge0,
\label{eq:matched_disturbed}
\end{equation}
where $\boldsymbol{x}\in\mathcal{D}\subseteq\mathbb{R}^n$, $\boldsymbol{u}\in\mathcal{U}\subseteq\mathbb{R}^m$, and $f,g$ are locally Lipschitz, ensuring the existence and uniqueness of solutions \cite{Khalil:1173048}. Now we introduce a few preliminaries on CBF.

\subsection{Robust Time-varying Control Barrier Function}
We first introduce the definition of forward invariance.
\begin{definition}[Time-varying Forward Invariance {\cite{lindemann2019stl}}]
\label{def:tv_forward_invariance_safety}
The time-varying set $\mathcal{C}(t)$ is \emph{forward invariant} for
\eqref{eq:matched_disturbed} if
$\boldsymbol{x}(0)\in\mathcal{C}(0)$ implies
$\boldsymbol{x}(t)\in\mathcal{C}(t)$ for all $t\ge 0$.
The system is said to be \emph{safe with respect to}
$\mathcal{C}(t)$ if $\mathcal{C}(t)$ is forward invariant.
\end{definition}

Now, consider a time-varying set $\mathcal{C}(t)$ defined as the super level set
of a continuously differentiable function
$h:\mathcal{D}\times\mathbb{R}_{0}^+\rightarrow\mathbb{R}$ yielding,
\begin{align}
\label{eq:tv_safe_set}
\mathcal{C}(t)
&=\{\boldsymbol{x} \in \mathcal{D} \subset \mathbb{R}^n : h(\boldsymbol{x},t)\ge 0\}, \nonumber\\
\partial \mathcal{C}(t)
&=\{\boldsymbol{x} \in \mathcal{D} \subset \mathbb{R}^n : h(\boldsymbol{x},t)=0\}, \nonumber\\
\mathrm{Int}(\mathcal{C}(t))
&=\{\boldsymbol{x} \in \mathcal{D} \subset \mathbb{R}^n : h(\boldsymbol{x},t)>0\}.
\end{align}
One can verify if the controller designed is ensuring forward invariance or not using the concept of CBFs, which is introduced next.
\begin{definition}[Time-varying Robust CBF{\cite{lindemann2019stl,jankovic2022rcbf}}]
\label{def:tv_rcbf}
Given the set $\mathcal{C}(t)$ defined by \eqref{eq:tv_safe_set}, with $L_g h(\boldsymbol{x},t)\neq 0$ for all $x\in\partial\mathcal{C}(t)$, the function $h(\boldsymbol{x},t)$ is a time-varying robust Control Barrier Function on a set $\mathcal{D}$ if there exists an extended class-$\mathcal{K}$ function $\alpha$ such that for all $(\boldsymbol{x},t)\in\mathcal{D}\times\mathbb{R}_{\ge 0}$ and all disturbances satisfying $\|\boldsymbol d(t)\|_\infty\le \bar d$,
\begin{equation*}
\sup_{\boldsymbol{u}\in\mathcal{U}}
\Big(L_t h(\boldsymbol{x},t)+ L_f h(\boldsymbol{x},t)+ L_g h(\boldsymbol{x},t)(\boldsymbol{u}+\boldsymbol{d})+ \alpha(h(\boldsymbol{x},t))\Big)\ge 0 .
\label{eq:tvr_cbf_def}
\end{equation*}
\end{definition}

Using Hölder’s inequality and the disturbance bound, the worst-case disturbance contribution satisfies
$|L_g h(\boldsymbol{x},t)\boldsymbol{d}|\le \|L_g h(\boldsymbol{x},t)\|_1\,\bar d$.
A sufficient condition for forward invariance of $\mathcal{C}(t)$ under bounded matched disturbance is therefore
\begin{equation*}
L_t h(\boldsymbol{x},t) + L_f h(\boldsymbol{x},t)+ L_g h(\boldsymbol{x},t)\boldsymbol{u}+ \alpha\!\big(h(\boldsymbol{x},t)\big)\;\hspace{-0.25em}\ge\hspace{-0.25em}\;\|L_g h(\boldsymbol{x},t)\|_1\,\bar d .
\end{equation*}
This condition defines a set of admissible control inputs for safety, from which a control action is selected via a CBF-QP.

\subsection{Time-varying CBF-QP Safety Filter}
Given a nominal locally Lipschitz control input
$\boldsymbol{u}_{\mathrm{nom}}(\boldsymbol{x},t)$, safety is enforced by solving
\begin{equation}
\label{eq:rcbf_qp_matched}
\begin{aligned}
\boldsymbol{u}^\star = \arg\min_{\boldsymbol{u}\in\mathcal{U}} \quad &
\|\boldsymbol{u}-\boldsymbol{u}_{\mathrm{nom}}(\boldsymbol{x},t)\|^2 \\
\text{s.t.}\;
L_t h(\boldsymbol{x},t)+L_f h(\boldsymbol{x},t) + & L_g h(\boldsymbol{x},t)\,\boldsymbol{u}
+ \alpha\!\big(h(\boldsymbol{x},t)\big)\\
& \ge \|L_g h(\boldsymbol{x},t)\|_1\,\bar d .
\end{aligned}
\end{equation}

The objective minimizes deviation from the nominal controller, while the constraint enforces forward invariance of $\mathcal{C}$, guaranteeing obstacle avoidance under bounded matched disturbances. Having established the general safety and robust barrier conditions, we note that distance-based control barrier functions are not well-suited for Ackerman and differential-drive dynamics, as they may lead to $L_g h=0$ and loss of control authority. We therefore adopt a collision cone-based control barrier function (C3BF)\cite{C3BF,goswami2024collision} that is compatible with such dynamics, which is introduced next.

\subsection{Collision Cone Control Barrier Function (C3BF)}

Consider a control affine system defined in \eqref{eqn: sys}. Let the relative position and relative velocity of an obstacle with respect to
the ego vehicle be defined as $\boldsymbol{p}_{\mathrm{rel}} := \boldsymbol{p}_{\mathrm{obs}}-\boldsymbol{p}$, $\boldsymbol{\upsilon}_{\mathrm{rel}} := \boldsymbol{\upsilon}_{\mathrm{obs}}-\boldsymbol{\upsilon},$
where $\boldsymbol{p}_{\mathrm{obs}},\boldsymbol{\upsilon}_{\mathrm{obs}}$ denote the obstacle position and velocity, respectively. The obstacle is conservatively approximated by a disk of radius $r>0$. The collision cone is defined as the set of relative velocity directions that would lead to an intersection with this disk. Let $\varphi\in(0,\pi/2)$ denote the half-angle of the collision cone, given by $\cos\varphi
= \frac{\sqrt{\|\boldsymbol{p}_{\mathrm{rel}}\|^2-r^2}}
{\|\boldsymbol{p}_{\mathrm{rel}}\|}$ (refer Fig~\ref{fig:cone}). Collision avoidance is enforced by ensuring that the relative velocity remains outside the collision cone (Fig. \ref{fig:cone}). Accordingly, the collision cone control barrier function (C3BF) time-varying version is defined as \cite{C3BF}
\begin{equation}
\label{eqn:c3bf}
h_{\mathrm{c3bf}}(\boldsymbol{p}_{\mathrm{rel}},\boldsymbol{\upsilon}_{\mathrm{rel}},t)
:= \boldsymbol{p}_{\mathrm{rel}}^\top \boldsymbol{\upsilon}_{\mathrm{rel}}(t) + \|\boldsymbol{p}_{\mathrm{rel}}\|\,\|\boldsymbol{\upsilon}_{\mathrm{rel}}(t)\|\cos\varphi .
\end{equation}

The associated safe set is $\mathcal{C}_{\mathrm{c3bf}}(t):=\{(\boldsymbol{p}_{\mathrm{rel}},\boldsymbol{\upsilon}_{\mathrm{rel}}(t))\mid h_{\mathrm{c3bf}}(\boldsymbol{p}_{\mathrm{rel}},\boldsymbol{\upsilon}_{\mathrm{rel}}(t))\ge 0\}$, where along the system trajectories the C3BF derivative satisfies $\dot h_{\mathrm{c3bf}} = L_f h_{\mathrm{c3bf}} + L_g h_{\mathrm{c3bf}}\,(\boldsymbol{u}+\boldsymbol{d}) + L_t h_{\mathrm{c3bf}}$, with $L_f h_{\mathrm{c3bf}}:=\nabla_x h_{\mathrm{c3bf}}^\top f(\boldsymbol{x})$, $L_g h_{\mathrm{c3bf}}:=\nabla_x h_{\mathrm{c3bf}}^\top g(\boldsymbol{x})$ for $\boldsymbol{x}=[\boldsymbol{p},\boldsymbol{\upsilon}]$ given by \eqref{eqn: sys}, and $L_t h_{\mathrm{c3bf}}:=\nabla_{\boldsymbol{\upsilon}_{\mathrm{rel}}} h_{\mathrm{c3bf}}^\top \dot{\boldsymbol{\upsilon}}_{\mathrm{obs}}(t)$ capturing the explicit time variation induced by the obstacle acceleration. Since obstacle acceleration $\boldsymbol{\dot{\upsilon}}_\mathbf{obs}$ and velocity $\boldsymbol{\upsilon}_\mathrm{obs}$ is not directly measurable, we adopt the following assumption.

\begin{assumption}
\label{ass:obstacle_motion}
Each obstacle follows either constant-velocity linear motion $\boldsymbol{\dot{\upsilon}}_\mathbf{obs}=0$ or pure circular motion with $\boldsymbol{\dot{\upsilon}}_\mathbf{obs}(t)$ depending on parameters $v_\mathrm{obs},\omega_\mathrm{obs},\boldsymbol{p}_c,$ and $\boldsymbol{p}_{\mathrm{obs}}(t)$ with known constant linear $v_\mathrm{obs}$, angular speed $\omega_\mathrm{obs}$, circle center $\boldsymbol{p}_c$ and its position $\boldsymbol{p}_{\mathrm{obs}}(t)$ is assumed known.
\end{assumption}
This assumption is reasonable for many practical scenarios, as moving obstacles in structured environments often exhibit either approximately constant-velocity motion or circular trajectories, such as pedestrians, ground vehicles, or rotating machinery. Moreover, these motion models provide sufficiently rich dynamics to capture common obstacle behaviors while maintaining analytical tractability. 

\begin{remark}
While safety is guaranteed under bounded matched disturbances, less conservative approaches such as ISSf-CBFs \cite{kolathyaames} or disturbance-observer-based CBFs \cite{wangdisturbed} may further improve performance and are left for future work.
\end{remark}

\begin{remark}[Validity of C3BF]
\label{rem:c3bf_validity}
The collision cone control barrier function (C3BF) is well defined for the
converted Ackermann and unicycle dynamics in
\eqref{eqn: converted ackermann drive} and \eqref{eqn: converted unicycle model}.
For sign definite $v(t)$, the C3BF depends on relative velocity and satisfies
$L_g h(\boldsymbol{x})\neq 0$, ensuring nondegenerate control authority.
In contrast to \cite{C3BF,goswami2024collision}, where a center-of-gravity (CG) shifting or approximating slip-angle strategy is employed to modify the system structure, the present work operates on the original higher-order dynamics without structural modification, while still guaranteeing safety with respect to moving obstacles.
\end{remark}

\subsection{Actuator and Model-Regularity Constraints (Soft Constraints)}

To preserve regularity of the input mapping $h_{\upsilon}$ in \eqref{eqn: sys} and to respect actuator limitations, admissible state and input constraints are enforced as \emph{soft} robust barrier constraints. Let $h_i:\mathbb{R}^n \to \mathbb{R}$, $i\in\mathcal I$, denote continuously differentiable state constraint functions including model-regularity and actuator constraints. For example in the Ackermann vehicle considered here, $\mathcal I=\{v_{\min},\, v_{\max},\, \delta_{3,\max}\}$.

These constraints are imposed in \eqref{eq:rcbf_qp_matched} through relaxed
robust inequalities of the form
\begin{align}
\label{eq:constraint_generic}
L_f h_i(\boldsymbol{x}) + L_g h_i(\boldsymbol{x})\,\boldsymbol{u}
+ \alpha_i\!\big(h_i(\boldsymbol{x})\big)
\;\ge\; \Delta_i - \delta_i, i\in\mathcal I,
\end{align}
where $\Delta_i\ge 0$ represents a robustness margin accounting for bounded
disturbances, and $\delta_i\ge 0$ are slack variables penalized in the QP
objective to ensure feasibility.
For Ackermann-type vehicles, $\mathcal I$ includes speed and steering bounds,
e.g.,
$h_{v_{\min}}=\boldsymbol{\upsilon}^\top\boldsymbol{\upsilon}-v_{\min}^2$,
$h_{v_{\max}}=v_{\max}^2-\boldsymbol{\upsilon}^\top\boldsymbol{\upsilon}$,
and $h_{\delta}=\delta_{3,\max}^2-\delta_3^2$.
For unicycle models, heading-rate limits are enforced through the admissible
input set $\mathcal{U}$.

\section{Safety-Critical Tracking via Sliding Mode and CBF-QP}

We now state the main result of the paper, which combines robust sliding-mode tracking with control barrier function-based safety enforcement.

\begin{theorem}[Safety-critical practical SMC tracking]
\label{thm:main_tv}
Consider a mobile robot given by \eqref{eqn: sys} with $\|\boldsymbol d\|_\infty\le\bar d$ satisfying Assumption~\ref{ass:hv_regular}. Let $\boldsymbol{p}_\mathrm{ref}$ be a reference trajectory that the system has to track and it obeys Assumption~\ref{ass:forward_motion}. We define the nominal practical sliding-mode controller for tracking $\boldsymbol{p}_\mathrm{ref}$ defined in \eqref{eqn:control_law_simplified} as $\boldsymbol{u}_{\mathrm{smc}}
=h_{\upsilon}^{-1}\!\Big(
\ddot{\boldsymbol{p}}_{\mathrm{ref}}-f_{\upsilon}
+\tfrac{\partial \boldsymbol{s}}{\partial \boldsymbol{e}_1}\boldsymbol{e}_2
+K\operatorname{sat}(\tfrac{\boldsymbol{S}}{\lambda})
\Big)$ with $K>\bar h\bar d$. Let us assume that there are moving obstacles in the considered state-space such that the obstacle satisfy Assumption~\ref{ass:obstacle_motion}. The safety-critical control law  $\boldsymbol{u}^\star$ is designed resulting from the time-varying robust C3BF \eqref{eqn:c3bf} and soft constraints in \eqref{eq:constraint_generic} using the safety filter: \begin{equation}
\label{eq:main_qp_tv}
\begin{aligned}
u^\star \in   &\arg\min_{\boldsymbol{u}\in\mathcal{U},\,\hat{\boldsymbol{\delta}}\ge 0}\;
\|\boldsymbol{u}-\boldsymbol{u}_{\mathrm{smc}}\|^2 +\rho\|\hat{\boldsymbol{\delta}}\|^2 \\
\text{\upshape s.t.}\;&
L_t h_{c3bf}+L_f h_{\mathrm{c3bf}} + L_g h_{\mathrm{c3bf}}\boldsymbol{u} +  \alpha(h_{\mathrm{c3bf}})
\ge \|L_g h_{\mathrm{c3bf}}\|_1\bar d,\\
& L_f h_i + L_g h_i\boldsymbol{u} + \alpha_i(h_i) \ge \Delta_i-\delta_i, \quad i\in\mathcal I,
\end{aligned}
\end{equation}
renders $\mathcal{C}_{\mathrm{c3bf}}(t):=\{(\boldsymbol{p}_{\mathrm{rel}},\boldsymbol{\upsilon}_{\mathrm{rel}}(t))\mid h_{\mathrm{c3bf}}(\boldsymbol{p}_{\mathrm{rel}},\boldsymbol{\upsilon}_{\mathrm{rel}}(t))\ge 0\}$ forward invariant for all $(\boldsymbol p(0) \boldsymbol{\upsilon}(0))\in \mathcal{C}_{\mathrm{c3bf}}(0)$ avoiding obstacles. Furthermore, when safety constraints are inactive, $\boldsymbol{u}^\star=\boldsymbol{u}_{\mathrm{smc}}$, ensuring $\|\boldsymbol S(t)\|\le\lambda$ is reached in finite time for practical tracking of $\boldsymbol p_{\mathrm{ref}}$.
\end{theorem}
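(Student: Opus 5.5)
The proof splits into two independent parts, matching the two claims of Theorem~\ref{thm:main_tv}: forward invariance of $\mathcal{C}_{\mathrm{c3bf}}(t)$ under the QP \eqref{eq:main_qp_tv}, and finite-time reaching of the boundary layer $\|\boldsymbol S\|\le\lambda$ when the constraints are inactive.

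\textbf{Safety.} First I would verify that the C3BF constraint can be satisfied, so that $u^\star$ is well defined. By Remark~\ref{rem:c3bf_validity}, Assumption~\ref{ass:forward_motion} (sign-definite $v$) and Lemma~\ref{lem:svd_ackermann_G}, we have $L_g h_{\mathrm{c3bf}}\neq 0$. The C3BF constraint is then a single affine half-space in $\boldsymbol u$ and is nonempty when $\mathcal U=\mathbb R^2$. The soft constraints never cause infeasibility because the slacks $\hat{\boldsymbol\delta}\ge0$ can always absorb them. Strong convexity of the cost then gives a unique minimizer. Under Assumption~\ref{ass:obstacle_motion}, the term $L_t h_{\mathrm{c3bf}}$ is a known function of $(\boldsymbol x,t)$, so the constraint is computable.

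Next, along closed-loop trajectories I would use $\dot h_{\mathrm{c3bf}}=L_th+L_fh+L_gh(\boldsymbol u^\star+\boldsymbol d)$ together with H\"older's inequality $L_gh\,\boldsymbol d\ge-\|L_gh\|_1\bar d$ and the constraint satisfied by $\boldsymbol u^\star$. These give
\begin{equation*}
\dot h_{\mathrm{c3bf}}\ge-\alpha(h_{\mathrm{c3bf}}).
\end{equation*}
The comparison lemma (or Nagumo's theorem applied to the time-augmented state $(\boldsymbol x,t)$) then shows that $h_{\mathrm{c3bf}}(0)\ge0$ implies $h_{\mathrm{c3bf}}(t)\ge0$ for all $t$, which is forward invariance in the sense of Definition~\ref{def:tv_forward_invariance_safety}. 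Finally, I would invoke the geometric meaning of the collision cone: $h_{\mathrm{c3bf}}\ge0$ means the relative velocity lies outside the cone, so the distance to the disk of radius $r$ cannot shrink to zero. This gives obstacle avoidance.

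\textbf{Tracking.} When neither constraint is active, the KKT conditions give $\hat{\boldsymbol\delta}=0$ and $\boldsymbol u^\star=\boldsymbol u_{\mathrm{smc}}$. Substituting into \eqref{eqn: sys} and differentiating $\boldsymbol S=\boldsymbol s(\boldsymbol e_1)+\boldsymbol e_2$ yields
\begin{equation*}
\dot{\boldsymbol S}=-K\operatorname{sat}(\boldsymbol S/\lambda)-h_\upsilon\boldsymbol d.
\end{equation*}
Take $V=\tfrac12\boldsymbol S^\top\boldsymbol S$. Outside the boundary layer, $\operatorname{sat}$ coincides with $\operatorname{sgn}$, so
\begin{equation*}
\dot V\le-\|\boldsymbol S\|_1\bigl(K-\|h_\upsilon\|_\infty\bar d\bigr)\le-\eta\|\boldsymbol S\|,
\end{equation*}
using Assumption~\ref{ass:hv_regular}, the condition $K>\bar h\bar d$ (with margin $\eta$ as in \eqref{eqn: bounded_k}), and $\|\cdot\|_1\ge\|\cdot\|$. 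Hence $\frac{d}{dt}\|\boldsymbol S\|\le-\eta$, so the layer is reached within time $\|\boldsymbol S(0)\|/\eta$, and it is invariant because $\dot V<0$ on its boundary. Boundedness of $\boldsymbol S$ then gives ultimate boundedness of $\boldsymbol e_1$ through the chosen surface $\boldsymbol s$, which is the claimed practical tracking.

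\textbf{Main obstacle.} The weak point is the element-wise step: with $\operatorname{sat}$ applied componentwise, a single component can lie inside the layer while $\|\boldsymbol S\|>\lambda$. I would handle this by doing the reaching argument component by component, concluding $|S_j|\le\lambda$ in finite time for each $j$, which gives $\|\boldsymbol S\|\le\sqrt2\lambda$. Alternatively, I would interpret the layer in the $\infty$-norm. A second subtlety is the ``constraints inactive'' hypothesis. I would state that the tracking claim holds on time intervals where the constraints are inactive, with the reaching time restarting from the value of $\boldsymbol S$ at the moment of deactivation.
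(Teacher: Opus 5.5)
Your tracking analysis is sound. The safety half, however, omits the step on which the paper's sketch actually rests: well-posedness of the filtered closed loop.

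\textbf{The missing step.} The paper argues as follows:
\begin{itemize}
\item replacing $\operatorname{sgn}$ by $\operatorname{sat}(\cdot/\lambda)$, together with Assumption~\ref{ass:hv_regular} (so $h_\upsilon^{-1}$ is locally Lipschitz on $\Omega$), makes $\boldsymbol u_{\mathrm{smc}}$ locally Lipschitz;
\item the QP minimizer $\boldsymbol u^\star$ inherits this regularity;
\item the closed loop therefore has unique classical solutions, along which the barrier inequality can be integrated.
\end{itemize}
You only show that the QP is feasible and has a unique minimizer at each fixed $(\boldsymbol x,t)$. Pointwise uniqueness does not by itself give continuity of $(\boldsymbol x,t)\mapsto\boldsymbol u^\star$, let alone Lipschitz continuity. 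So your phrase ``along closed-loop trajectories'' presupposes solutions you have not shown to exist. The comparison argument needs $t\mapsto h_{\mathrm{c3bf}}$ to be absolutely continuous along an actual solution, and the Nagumo route additionally needs uniqueness.

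To close the gap, first show that $\boldsymbol u_{\mathrm{smc}}$ is locally Lipschitz in $\boldsymbol x$. This uses $\operatorname{sat}$, Assumption~\ref{ass:hv_regular}, local Lipschitzness of $f_\upsilon$ and $h_\upsilon$, and a surface $\boldsymbol s$ with locally Lipschitz Jacobian. Then use $L_g h_{\mathrm{c3bf}}\neq0$, plus a constraint qualification on the active soft constraints and $\hat{\boldsymbol\delta}\ge0$, to invoke the standard result that the minimizer of a strongly convex parametric QP is locally Lipschitz in its data. In other words, the boundary layer is needed for the safety claim as well as the tracking claim.

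\textbf{Smaller corrections.}

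\emph{Geometric step.} Your last step overstates what $h_{\mathrm{c3bf}}\ge0$ gives. Suppose the relative velocity lies on the cone boundary. Then the ego vehicle moves along a tangent to the disk with $h_{\mathrm{c3bf}}\equiv0$ and reaches $\|\boldsymbol p_{\mathrm{rel}}\|=r$ in finite time. The constraint permits this with equality, e.g.\ for a double integrator with $\bar d=0$, $\boldsymbol u=\boldsymbol 0$ and a constant-velocity obstacle. So what you get is non-penetration of the disk, not a distance to it that stays bounded away from zero.

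\emph{Componentwise saturation.} Your worry here is justified. Drop your first-pass claim that the Euclidean $\lambda$-ball is invariant because $\dot V<0$ on its boundary: there one only has $\dot V\le-K\lambda+\sqrt2\,\lambda\bar h\bar d$. The componentwise argument is the right conclusion: it gives $\|\boldsymbol S\|_\infty\le\lambda$, hence $\|\boldsymbol S\|\le\sqrt2\lambda$, in finite time. The Euclidean bound in the statement is in fact false under $K>\bar h\bar d$ alone. Take $h_\upsilon=I$, $\boldsymbol d\equiv\bar d\,[1;1]$ and $\bar d<K<\sqrt2\,\bar d$. Every trajectory then converges to $\boldsymbol S=-\tfrac{\lambda\bar d}{K}[1;1]$, whose Euclidean norm $\sqrt2\,\lambda\bar d/K$ exceeds $\lambda$. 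The Euclidean version requires $K>\sqrt2\,\bar h\bar d$. The paper's sketch does not address this and simply defers to the SMC literature.

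\emph{Constraints inactive.} The conclusion $\boldsymbol u^\star=\boldsymbol u_{\mathrm{smc}}$ also requires $\boldsymbol u_{\mathrm{smc}}\in\mathcal U$, so include this in your ``constraints inactive'' hypothesis.
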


\begin{proof}[Sketch]
Under Assumptions~\ref{ass:hv_regular}, the practical sliding-mode controller $\boldsymbol{u}_{\mathrm{smc}}$ rendering it locally Lipschitz. This implies that the QP solution $\boldsymbol{u}^\star$ inherits this regularity \cite{amestheory}, ensuring the existence and uniqueness of closed-loop solutions. This regularity allows the C3BF constraint, constructed under Assumption~\ref{ass:obstacle_motion}, to enforce the forward invariance of $\mathcal{C}_{\mathrm{c3bf}}(t)$ in the classical sense. Finally, when safety constraints are inactive, $\boldsymbol{u}^\star=\boldsymbol{u}_{\mathrm{smc}}$, and the gain condition $K>\bar h \bar d$ drives the system to the boundary layer $\|\boldsymbol{S}\|\le\lambda$ in finite time \cite{utkin2013sliding}.
\end{proof}

\section{Case Studies}
\label{sec: case study}
To demonstrate the effectiveness of the proposed tracking controller in the presence of moving obstacles, we present three distinct case studies: (i) a Differential drive robot (Turtlebot hardware), (ii) an Ackermann drive robot (F1Tenth hardware), and (iii) a drone operating in altitude hold mode (Gazebo simulation). For the Ackermann drive and differential drive, we follow assumption \ref{ass:forward_motion} and Lemma \ref{lem:svd_ackermann_G} to make the input map $h_v$ non-singular and bounded. We use ROS2 interface for the hardware experiments to obtain the obstacle and the robot's position from the PhaseSpace motion capture system. 

\begin{figure}
    \centering
    \includegraphics[width=0.9\linewidth]{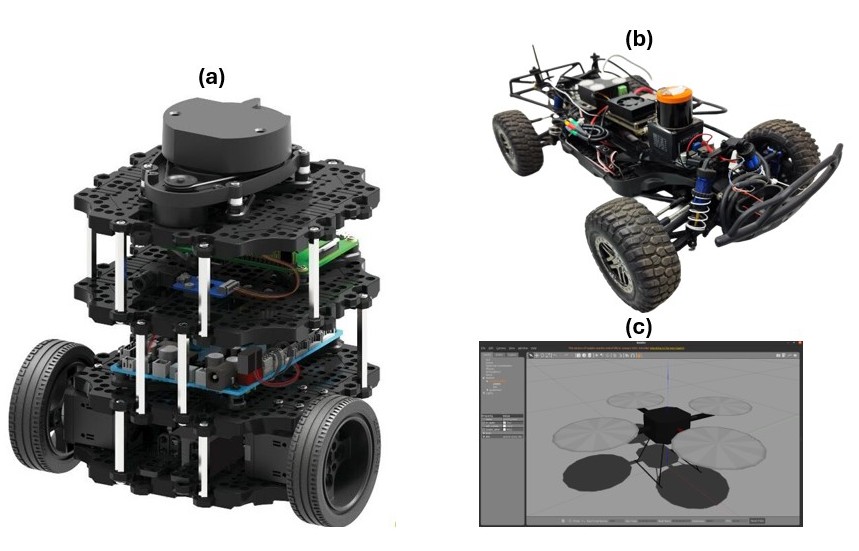}
    \caption{Robots considered in case study (a) Turtlebot (b) F1tenth car (c) Drone model in Gazebo}
    \label{fig: robots}
\end{figure}

\subsection{Ackermann Drive Vehicle (Hardware)} 
We consider an Ackermann-steered F1Tenth robot (Fig.~\ref{fig: robots} (b)) modeled by \eqref{eqn: ackermann drive} with parameters $l_f=0.17145\,\mathrm{m}$ and $l_r=0.15875\,\mathrm{m}$. A circular reference trajectory $\boldsymbol{p}_{\mathrm{ref}}=[1.58+\cos(0.2\pi t),\,1.78+\sin(0.2\pi t)]\,\mathrm{m}$ is used to track using linear SMC, with $|v|\in[0.25,3]\,\mathrm{m/s}$ and $|\delta_3|\le0.4\,\mathrm{rad}$. The dynamics are transformed into~\eqref{eqn: converted ackermann drive}.
A disturbance bound $\bar d = 0.2$ is assumed and enforced consistently in both the
sliding mode controller design and the CBF constraints.
Accordingly, the sliding gain is chosen as $K = 1$.
Using the bound $\sigma_{\max}(h_\upsilon)\approx 1.68$
(Remark~\ref{rem: h_upsilon_boundedness}), this choice satisfies the sliding
condition in \eqref{eqn: bounded_k}. Experimental results (Figs.~\ref{fig: f1 tenth xy}-\ref{fig: f1 10th sliding surface}) confirm that the controller maintains precise and robust tracking and safety throughout the maneuver demonstrated in this \href{https://youtu.be/dWcxwum96vk}{video}. In contrast to our work, \cite{LIU2020103} relies on a dynamic Ackermann model requiring difficult-to-identify inertia and friction parameters and does not address obstacle avoidance. Thus our work solves both the above problems by considering simple kinematics model.




\begin{figure}
    \centering
    \includegraphics[width=1.0\linewidth]{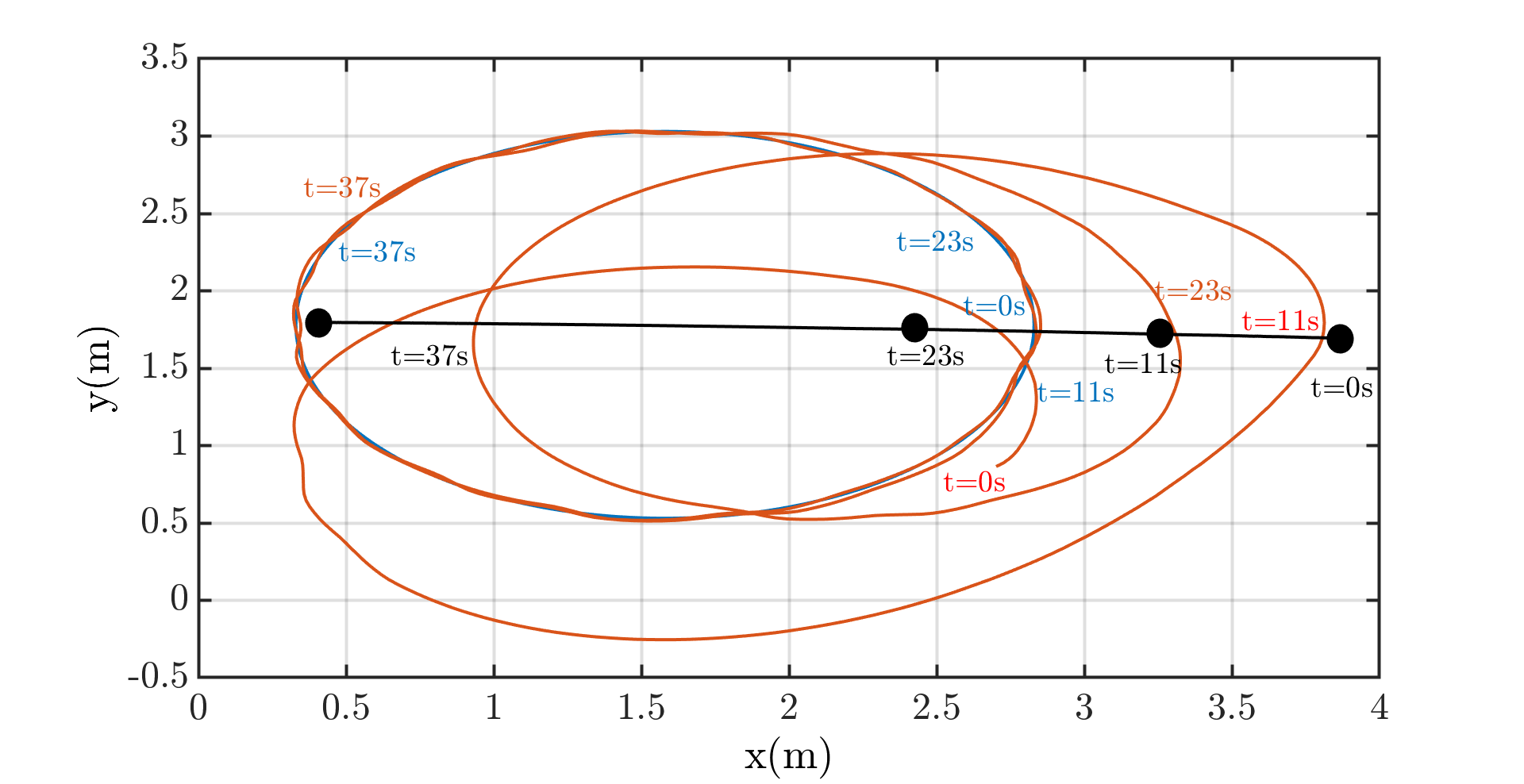}
    \caption{Ackermann drive robot performance. Reference(blue) vs actual(red) trajectory. Obstacle(black). Timestamps for the reference, robot, and obstacle are color-coded.\href{https://youtu.be/dWcxwum96vk}{[video link]}} 
    \label{fig: f1 tenth xy}
\end{figure}

\begin{figure}
    \centering
    \includegraphics[width=1.0\linewidth]{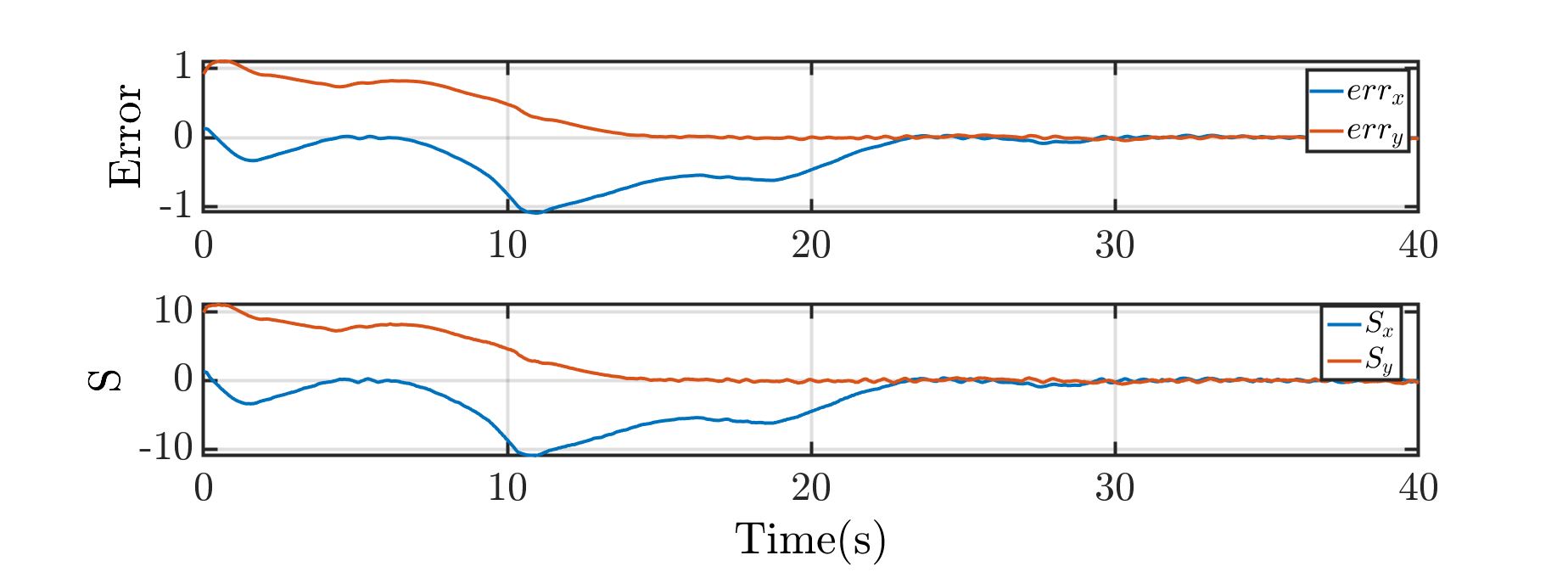}
    \caption{Ackermann drive: Time histories of sliding surface and error}
    \label{fig: f1 10th sliding surface}
\end{figure}

\subsection{Differential Drive Vehicle (Hardware)} 
We consider a differential-drive TurtleBot3 robot (Fig.~\ref{fig: robots} (a)) modeled by \eqref{eqn: converted unicycle model}. A Lissajous reference trajectory
$\boldsymbol{p}_{\mathrm{ref}}=[2.2+1.8\sin(0.23\pi t),\,1.5+0.95\cos(0.15\pi t)]m$  with $|v|\in[0.01,0.2],\mathrm{m/s}$ and $|\dot{\delta}_1|\leq2\mathrm{rad/s}$
is tracked by applying a nonsingular terminal sliding mode controller~\cite{feng2013nonsingular}. A disturbance of $\bar d=0.1$ is assumed and consistently enforced across the SMC design and the CBF constraints. Accordingly, the nominal SMC gain is chosen as $K=0.3$. Using the bound $\sigma_{\max}(h_\upsilon)\approx1$ (Remark~\ref{rem: h_upsilon_boundedness}), this choice satisfies the sliding condition in \eqref{eqn: bounded_k}. The resulting SMC reference is subsequently refined by the C3BF-QP safety filter to enforce collision avoidance against a moving obstacle following the pure circular trajectory
$\boldsymbol{p}_{\mathrm{obs}}(t)
=[2.5+0.8\cos(\omega t+\theta_0),1.75+0.8\sin(\omega t+\theta_0)\,]^\top$,
with linear speed $v_\mathrm{obs}=0.16\,\mathrm{m/s}$ and angular speed $\omega_\mathrm{obs}=0.2\,\mathrm{rad/s}$, where
$\theta_0=\tan^{-1}\big(\frac{p_{\mathrm{obs},y}(0)-1.75}{p_{\mathrm{obs},x}(0)-2.5}\big)$.
Experimental results (Figs~\ref{fig: turtlebot xy}-\ref{fig: turtlebot plots}) confirm that the controller maintains precise tracking and safety throughout the maneuver.

\begin{figure}
    \centering
    \includegraphics[width=1.0\linewidth]{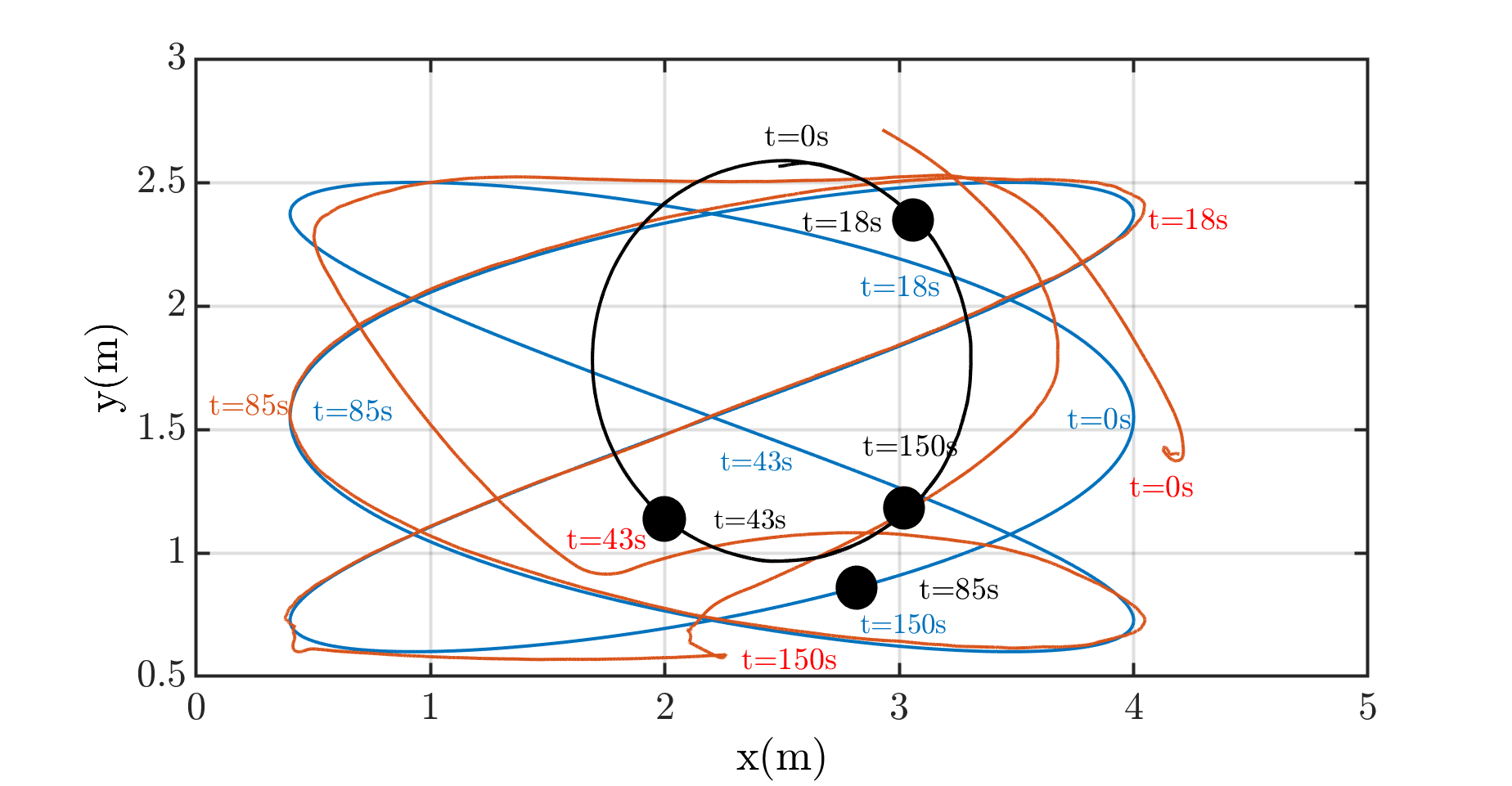}
    \caption{Differential drive performance. Curves: Reference (blue) vs. Actual (red). Region: Obstacle (black). Timestamps are color-coded to match.\href{https://youtu.be/dWcxwum96vk}{[video link]}}
    \label{fig: turtlebot xy}
\end{figure}

\begin{figure}
    \centering
    \includegraphics[width=1.0\linewidth]{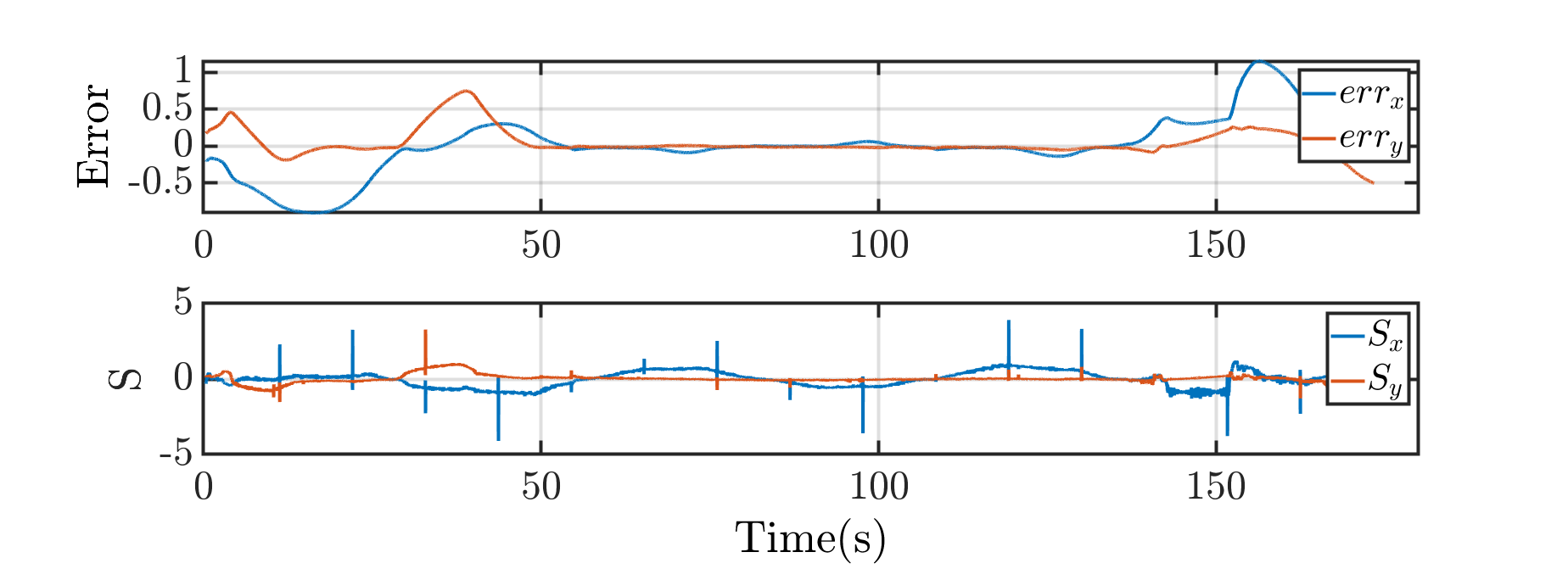}
    \caption{Differential drive: Time histories of sliding surface and error}
    \label{fig: turtlebot plots}
\end{figure}

\subsection{Drone (Simulation)} 
Finally, we evaluate the framework on a drone in a Gazebo environment (see Fig.~\ref{fig: robots} (c)) operating in altitude hold mode. The planar drone dynamics are modeled as a double integrator as done in \cite{pan2021improved}: $\dot{\boldsymbol{p}}=\boldsymbol{v}$ and $\dot{\boldsymbol{v}}=\boldsymbol{a}$, where $\boldsymbol{p}=[x, y]^\top$ is the position, $\boldsymbol{v}=[v_x, v_y]^\top$ is the velocity, and $\boldsymbol{a}=[a_x, a_y]^\top$ is the acceleration input. 
\begin{figure}
    \centering
    \includegraphics[width=1.0\linewidth]{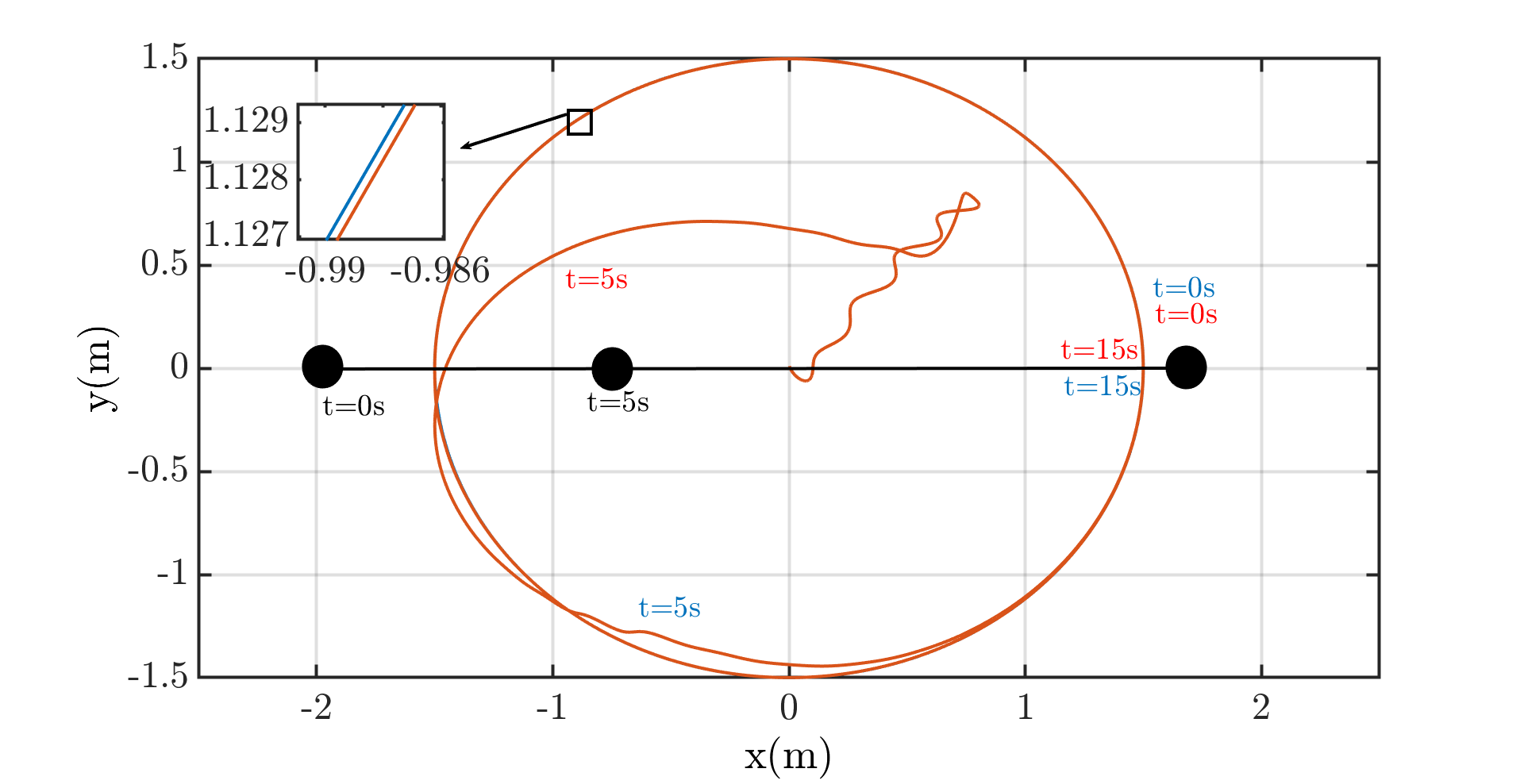}
    \caption{Drone performance in altitude hold. Curves: Reference (blue) vs. Actual (red). Region: Obstacle (black). Timestamps are color-coordinated.}
    \label{fig: drone plot xy}
\end{figure}
Since these dynamics already satisfy the form of \eqref{eqn: sys}, the sliding mode controller \eqref{eqn:control_law_simplified} is applied directly for tracking circular trajectory. Moving obstacle avoidance is enforced by incorporating the nominal input into the C3BF-QP framework, with the optimized output driving the simulated drone. The effectiveness of the approach is visualized in the supporting \href{https://youtu.be/dWcxwum96vk}{video}, with the trajectory plotted in Fig~\ref{fig: drone plot xy}.

\section{Conclusion}
We proposed a unified control strategy integrating Sliding Mode Control (SMC) and Collision Cone Control Barrier Functions (C3BF). Notably, we introduced a novel SMC formulation for Ackermann vehicles enabled by a canonical transformation. Extensive simulations and hardware validations confirmed the framework's efficacy in ensuring precise tracking and safety in dynamic environments.

\bibliographystyle{ieeetr} 
\bibliography{Sources1}

\appendix

\subsection{Proof of Lemma 3.1}
\begin{proof}
Define $R(\psi)=\begin{bmatrix}\cos\psi & -\sin\psi\\ \sin\psi & \cos\psi\end{bmatrix}$ and $D(v,\delta_3)=\mathrm{diag}(1,\mathcal{K}(\delta_3)v)$. A direct multiplication gives $h_{\upsilon}=R(\psi)D(v,\delta_3)$. Since $R(\psi)$ is orthogonal, $h_{\upsilon}^\top h_{\upsilon}=D^\top D=\mathrm{diag}(1,(\mathcal{K}(\delta_3)v)^2)$, and hence the singular values of $h_{\upsilon}$ are $1$ and $|\mathcal{K}(\delta_3)v|$.

Let $\kappa:=\frac{l_r}{l_r+l_f}\in(0,1)$ and set $t=\tan^2(\delta_3)\ge0$. Using $\sec^2(\delta_3)=1+t$, one may write $\mathcal{K}(\delta_3)=\kappa(1+t)/(1+\kappa^2 t)$, which is strictly increasing in $t$. Therefore, for all $|\delta_3|\le\delta_{3,\max}<\pi/2$, we have bounds $\kappa \le \mathcal{K}(\delta_3) \le \mathcal{K}(\delta_{3,\max})$. If $|v|\in[v_{\min},v_{\max}]$ with $v_{\min}>0$, then we see the smallest singular value is bounded below by $\sigma_{\min}(h_{\upsilon})=\min\{1,|\mathcal{K}(\delta_3)v|\} \ge\min\{1,\kappa v_{\min}\}>0$ and the largest singular value is bounded above by $\sigma_{\max}(h_{\upsilon})=\max\{1,|\mathcal{K}(\delta_3)v|\} \le\max\{1,\mathcal{K}(\delta_{3,\max})v_{\max}\}$. This proves uniform non-singularity and boundedness on the operating set.
\end{proof}

\end{document}